\documentclass[conference]{IEEEtran}
\IEEEoverridecommandlockouts
\usepackage[T1]{fontenc}
\usepackage{cite}
\usepackage{amsmath,amssymb,amsfonts}
\usepackage{mathtools}
\usepackage{graphicx}
\usepackage{textcomp}
\usepackage{xcolor}
\usepackage{algorithm}
\usepackage{algorithmic}
\usepackage{latexsym}
\usepackage{ascmac}
\usepackage[caption=false,font=footnotesize]{subfig} % preamble に追加
\def\BibTeX{{\rm B\kern-.05em{\sc i\kern-.025em b}\kern-.08em
    T\kern-.1667em\lower.7ex\hbox{E}\kern-.125emX}}
\usepackage{tikz}
\usetikzlibrary{arrows.meta,positioning,fit,calc}

\newcommand{\ut}{{\cal UT}}
\newcommand{\luttp}{{\cal LU\!T\!T\!P}}

\newcommand{\mybk}{\hspace{-0.11em}}

\newcommand{\vs}{V{\mybk}S}
\newcommand{\ins}{I{\mybk}n{\mybk}S}
\newcommand{\h}{{\cal H}}

\usepackage{amsthm}

\theoremstyle{plain}
\newtheorem{theorem}{Theorem}
\newtheorem{lemma}[theorem]{Lemma}

\theoremstyle{definition}
\newtheorem{definition}[theorem]{Definition}

\theoremstyle{remark}

\begin{document}

\title{
Efficient Pattern Matching for Unordered Term Tree Patterns under Generalized Height-Constrained Bindings
\thanks{This is an author preprint of a paper presented at the 22nd International Conference on E-Service and Knowledge Management (ESKM 2026), IIAI-AAI 2026. The paper was accepted for inclusion in the conference proceedings published by IEEE Computer Society Conference Publishing Services (CPS).}
\thanks{This work was partially supported by JSPS KAKENHI Grant Numbers JP24K15074 and JP24K15090.}
}

\author{
    \IEEEauthorblockN{
        Shintaro Matsushita\IEEEauthorrefmark{1}\IEEEauthorrefmark{3},
        Takayoshi Shoudai\IEEEauthorrefmark{1}\IEEEauthorrefmark{4}, and
        Yusuke Suzuki\IEEEauthorrefmark{2}
    }
    \IEEEauthorblockA{
        \IEEEauthorrefmark{1}Department of Computer Science and Engineering, Fukuoka Institute of Technology, Fukuoka 811-0295, Japan\\
        Email: \IEEEauthorrefmark{3}mfm25113@bene.fit.ac.jp, \IEEEauthorrefmark{4}shodai@fit.ac.jp
    }
    \IEEEauthorblockA{
        \IEEEauthorrefmark{2}Faculty of Information Sciences, Hiroshima City University, Hiroshima 731-3194, Japan\\
        Email: y-suzuki@hiroshima-cu.ac.jp
    }
}

\maketitle

\begin{abstract}
Unordered trees are useful for modeling hierarchical structures in which the order among siblings is irrelevant. To represent flexible structural patterns in such data, unordered term tree patterns with height-constrained variables provide a natural framework. In our previous work, we studied the pattern matching problem for rooted unordered term tree patterns with height-constrained variables under the restriction that the child port of each variable must correspond to a leaf of a binding tree. In this paper, we remove this restriction and generalize the binding model so that the child port may correspond to any non-root vertex of a binding tree. Under generalized bindings, we formulate the corresponding membership problem and present a polynomial-time pattern matching algorithm. We also implement the proposed algorithm and conduct computational experiments to evaluate its running time. The experimental results show that the proposed method achieves practical running times.
\end{abstract}

\begin{IEEEkeywords}
  height constrained tree pattern, polynomial-time algorithm, pattern matching algorithm, membership problem
\end{IEEEkeywords}

%%%%%%%%%%%%%%%%%%%%%%%%%%%%%%%%%%%%%%%%%%%%%%%%%%%%%%%%%%%%%%%%%%%%%%
%% 1. Introduction
%%%%%%%%%%%%%%%%%%%%%%%%%%%%%%%%%%%%%%%%%%%%%%%%%%%%%%%%%%%%%%%%%%%%%%

\section{Introduction}\label{sec:intro}

Unordered trees arise in a variety of application domains in which the relative order among siblings is irrelevant, such as abstract syntax representations, chemical structures, and hierarchical data with unordered branching. Tree patterns on such structures provide a natural way to describe flexible structural constraints. In particular, unordered term tree patterns with height-constrained variables form a useful framework, since each variable can represent a tree-structured component subject to constraints on its trunk length and height. In our previous work~\cite{matsushita-eskm2025}, we studied the pattern matching problem for rooted unordered term tree patterns with height-constrained variables and presented a polynomial-time algorithm for deciding whether a given unordered tree belongs to the language generated by a given pattern. In that model, however, the child port of each height-constrained variable was required to correspond to a leaf of a binding tree.

For rooted unordered trees, this leaf restriction is not essential, because no sibling order needs to be preserved in a binding operation. Motivated by this observation, in this paper we generalize the notion of bindings for height-constrained variables so that the child port may correspond to any non-root vertex of a binding tree. For example, in abstract syntax representations and chemical structure trees, a constrained variable may naturally attach at an internal vertex of a bound subtree rather than only at a leaf. In such cases, the previous leaf restriction excludes patterns that are structurally meaningful in practice. Under generalized bindings, we formulate the corresponding membership problem, develop a polynomial-time pattern matching algorithm, and report computational results obtained from an implementation of the proposed algorithm.

For unordered trees, tree inclusion and tree pattern matching have been studied as fundamental problems on structured data. Classical and more recent results on unordered tree inclusion include those of Kilpel\"ainen and Mannila~\cite{kilpelainen-mannila1995} and Akutsu et al.~\cite{akutsu-tcs2021}. In our research framework, studies on term tree patterns have progressed from unordered tree patterns with internal variables~\cite{shoudai-fct2001} to efficient matching for ordered term tree patterns~\cite{suzuki-ieice2015}. The framework of height-constrained variables was then studied in the ordered setting from a language-theoretic and inductive-inference viewpoint~\cite{shoudai-ieice2017}, and later extended to unordered term tree patterns of bounded dimension~\cite{shoudai-ieice2018}. Our previous work~\cite{matsushita-eskm2025} considered the rooted unordered case under a leaf-restricted binding model.

Our main result is that the above membership problem under generalized bindings is solvable in polynomial time. More precisely, we present a dynamic-programming-based matching algorithm, analyze its time complexity, and evaluate its actual running time through computational experiments. The experimental results indicate that the proposed method is not only polynomial-time solvable in theory but also computationally practical in actual execution time.

%%%%%%%%%%%%%%%%%%%%%%%%%%%%%%%%%%%%%%%%%%%%%%%%%%%%%%%%%%%%%%%%%%%%%%
%% 1. Preliminaries
%%%%%%%%%%%%%%%%%%%%%%%%%%%%%%%%%%%%%%%%%%%%%%%%%%%%%%%%%%%%%%%%%%%%%%
\section{Preliminaries}\label{sec:preliminaries}

We consider rooted unordered trees of dimension $2$. For a set $S$, let $|S|$ denote the number of elements in $S$. Let $\Sigma$ and $\Lambda$ be finite alphabets of vertex labels and edge labels, respectively, and let $X$ be an infinite alphabet of variable labels. We assume that $(\Sigma \cup \Lambda) \cap X = \emptyset$. Let $\ut_{\Sigma,\Lambda}$ denote the set of all rooted unordered trees whose vertices and edges are labeled with symbols in $\Sigma$ and $\Lambda$, respectively.

\begin{definition}[Linear unordered term tree pattern]
Let $T=(V_T,E_T)$ be a rooted unordered tree. Let $E_t$ and $H_t$ be a partition of $E_T$. An unordered term tree pattern is a triple $t=(V_t,E_t,H_t)$ obtained from $T$, where $V_t = V_T$. Elements of $V_t$, $E_t$, and $H_t$ are called vertices, edges, and variables, respectively, and are labeled with symbols in $\Sigma$, $\Lambda$, and $X$, respectively.

The root, leaves, parent-child relation, and height of $t$ are those of the underlying rooted unordered tree $T$. If $(v,v') \in E_t$, then $(v,v')$ denotes the edge from $v$ to $v'$. If $[v,v'] \in H_t$, then $[v,v']$ denotes the variable from $v$ to $v'$, where $v$ is the parent port and $v'$ is the child port. A non-leaf vertex is called an internal vertex. The pattern $t$ is said to be \emph{linear} if no two variables in $H_t$ share the same variable label. In this paper, we deal only with linear unordered term tree patterns.
\end{definition}

\begin{definition}[Height-constrained variables]
Let $X^\h$ be an infinite subset of $X$. For each pair of integers $i,j$ with $1 \le i \le j$, let $X^{\h(i,j)}$ be an infinite subset of $X^\h$. We assume that $X^\h=\bigcup_{1 \le i \le j} X^{\h(i,j)}$ and $X^{\h(i,j)} \cap X^{\h(i',j')}=\emptyset$ holds whenever $(i,j) \neq (i',j')$.

A variable label in $X^{\h(i,j)}$ is called an \emph{$(i,j)$-height-constrained variable label}, or simply an \emph{$(i,j)$-HC-variable label}. A variable $[v,v']$ in an unordered term tree pattern is called an \emph{$(i,j)$-height-constrained variable}, or simply an \emph{$(i,j)$-HC-variable}, if its label belongs to $X^{\h(i,j)}$.

We write $\luttp_{\Sigma,\Lambda,X^\h}$ for the set of all linear unordered term tree patterns in which vertices are labeled with symbols in $\Sigma$, edges are labeled with symbols in $\Lambda$, and variables are labeled with symbols in $X^\h$.
\end{definition}

\begin{definition}[Generalized bindings and substitutions]
Let $x$ be a variable label in $X^{\h(i,j)}$ for some integers $1 \le i \le j$, and let $g$ be an unordered term tree pattern in $\luttp_{\Sigma,\Lambda,X^\h}$ with at least two vertices. Let $r_g$ be the root of $g$, and let $q \in V_g \setminus \{r_g\}$. We say that $(g,q)$ is \emph{admissible} for $x$ if $dist_g(r_g,q) \ge i$ and $height(g) \le j$. If $(g,q)$ is admissible for $x$, then $x := (g,q)$ is called a \emph{generalized binding} for $x$, and $dist_g(r_g,q)$ is called its trunk length.

Let $f=(V_f,E_f,H_f)$ be an unordered term tree pattern in $\luttp_{\Sigma,\Lambda,X^\h}$, and let $h=[v,v']$ be an $(i,j)$-HC-variable of $f$ labeled with $x \in X^{\h(i,j)}$. For a generalized binding $x := (g,q)$, let $g'$ be a copy of $g$, and let $r'$ and $q'$ be the vertices of $g'$ corresponding to $r_g$ and $q$, respectively. The pattern obtained by applying $x := (g,q)$ to $h$ is defined by removing $h$ from $H_f$, identifying $v$ with $r'$, and identifying $v'$ with $q'$. The labels of the identified vertices inherit the labels of $v$ and $v'$ in $f$.

A \emph{generalized substitution} is a finite set $\theta = \{x_1 := (g_1,q_1), \ldots, x_m := (g_m,q_m)\}$ such that the variable labels $x_1,\ldots,x_m$ are pairwise distinct, and each $g_\ell$ contains no variable whose label belongs to $\{x_1,\ldots,x_m\}$. The instance $f\theta$ is the unordered term tree pattern obtained by applying all generalized bindings in $\theta$ simultaneously, and its root is the root of $f$.
\end{definition}

\begin{figure}[t]
\centering
\includegraphics[scale=0.48]{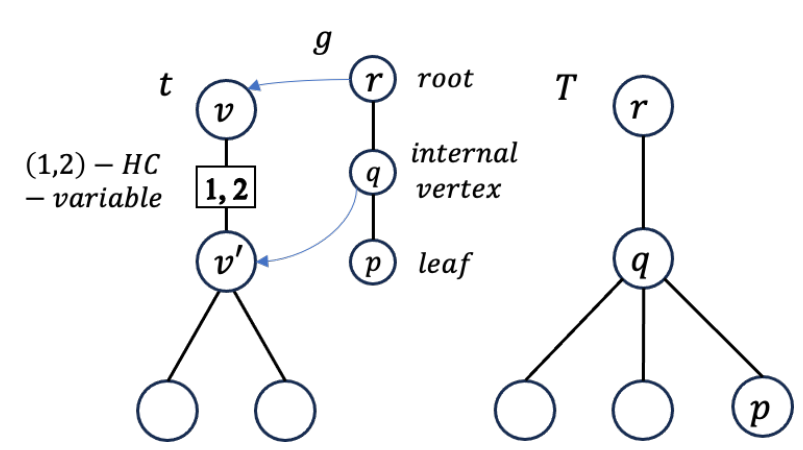}
\caption{An instance matched only under the generalized model. The square denotes an $(1,2)$-HC-variable whose endpoints are the parent port $v$ and the child port $v'$. Since $v'$ is matched to the internal vertex $q$ of the binding tree $g$, the instance is not matched under the previous leaf-restricted model but is matched under the generalized model.}\label{fig:example-match-unmatch}
\end{figure}

As shown in Fig.~\ref{fig:example-match-unmatch}, the child port of an HC-variable may be matched to an internal vertex of a binding tree under the generalized model.
Such a situation naturally occurs when a variable represents a structured component that attaches through an internal connection point rather than a leaf, as in abstract syntax representations or chemical structure trees.
For $t \in \luttp_{\Sigma,\Lambda,X^\h}$, let $L(t)=\{\,T \in \ut_{\Sigma,\Lambda} \mid T \cong t\theta \text{ for some generalized substitution } \theta\,\}$.
We say that an unordered tree $T$ matches an unordered term tree pattern $t$ if $T \in L(t)$.

The membership problem for $\luttp_{\Sigma,\Lambda,X^\h}$ under generalized bindings is defined as follows: given $t \in \luttp_{\Sigma,\Lambda,X^\h}$ and $T \in \ut_{\Sigma,\Lambda}$, decide whether $T \in L(t)$. Unlike the previous rooted model, the child port of an HC-variable may correspond to any non-root vertex of a binding tree, not necessarily a leaf, while the height constraint is still imposed on the whole binding tree. The previous definition is obtained as a special case by restricting the distinguished vertex $q$ to a leaf of the binding tree.

%%%%%%%%%%%%%%%%%%%%%%%%%%%%%%%%%%%%%%%%%%%%%%%%%%%%%%%%%%%%%%%%%%%%%%
%% 3. Polynomial-Time Algorithms
%%%%%%%%%%%%%%%%%%%%%%%%%%%%%%%%%%%%%%%%%%%%%%%%%%%%%%%%%%%%%%%%%%%%%%
\section{A Polynomial-Time Matching Algorithm under Generalized Bindings}\label{sec:algorithm}

Throughout this section, we assume $|\Sigma|=|\Lambda|=1$ for clarity of presentation. The algorithm extends straightforwardly to general alphabets $\Sigma$ and $\Lambda$ by additionally requiring that vertex and edge labels match at each step; the asymptotic complexity is unchanged.

Let $t$ be a rooted unordered term tree pattern in $\luttp_{\Sigma,\Lambda,X^\h}$ and let $T$ be a rooted unordered tree in $\ut_{\Sigma,\Lambda}$. In this section, we present a polynomial-time algorithm for deciding whether $T \in L(t)$ under generalized bindings. Processing $T$ in a bottom-up order, the algorithm computes for each vertex $u$ two kinds of information: $\vs(u)$, which records exact matches of subpatterns of $t$ at $u$, and $\ins(u)$, which stores compressed inheritance information.

The main difficulty in the generalized model is that inherited states must retain not only local feasibility but also minimum height information; in particular, triples of the form $(u',0,j')$ with $j' > 0$ must be propagated upward. To handle this, for each relevant local matching instance we construct a weighted bipartite graph with dummy vertices and solve a bottleneck matching problem. The optimum bottleneck value gives the minimum height that must be inherited upward.

For the time-complexity analysis, we assume that each local bottleneck matching instance is solved by the algorithm of Gabow and Tarjan~\cite{gabow-tarjan1988}, which improves upon Bhat's $O(n^{1/2} m \log n)$-time algorithm~\cite[p.~412]{gabow-tarjan1988} and runs in $O((n \log n)^{1/2}m)$ time. Since both the size and the number of local instances are polynomially bounded by the sizes of $t$ and $T$, the overall algorithm runs in polynomial time.

For $u \in V_t$, let $t[u]$ denote the subpattern of $t$ rooted at $u$, and for $v \in V_T$, let $T[v]$ denote the subtree of $T$ rooted at $v$. If $d$ is a descendant of $v$, then $T[v]-T[d]$ denotes the tree obtained from $T[v]$ by deleting all proper descendants of $d$, and $height_{T[v]}(d)$ denotes its height.

\begin{definition}[Matching rule]
Let $u'$ be an internal vertex of $t$, and let $c'_1,\ldots,c'_m$ be the children of $u'$. For each child $c'_\ell$ $(1 \le \ell \le m)$, set $J(c'_\ell)=c'_\ell$ if $c'_\ell$ is not the child port of an HC-variable. If $c'_\ell$ is the child port of an $(i,j)$-HC-variable, then set $J(c'_\ell)=(c'_\ell,i,j)$.
The \emph{matching rule} associated with $u'$ is the expression $u' \Leftarrow \{J(c'_1),\ldots,J(c'_m)\}$. Whether $u'$ is the child port of an HC-variable is determined from $t$ and is not encoded in the rule itself. The set of all such rules is denoted by $Rule(t)$.
\end{definition}

\begin{definition}[Exact-match information]
For $v \in V_T$, let $\vs(v)=\{\,u' \in V_t \mid t[u'] \text{ matches } T[v]\,\}$.
\end{definition}

\begin{definition}[Full inheritance information]
Let $v \in V_T$. The full inheritance information at $v$, denoted by $\widehat{\ins}(v)$, is the set of all triples $(u',i',j')$ such that $u'$ is the child port of an HC-variable of $t$, and there exists a descendant $d$ of $v$ with $u' \in \vs(d)$, $dist_T(v,d)=i'$, and $height_{T[v]}(d)=j'$.
\end{definition}

A triple $(u',i',j') \in \widehat{\ins}(v)$ means that $u'$ matches at some descendant $d$ of $v$ with $dist_T(v,d)=i'$ and $height_{T[v]}(d)=j'$. For the correctness proof, it is convenient to use $\widehat{\ins}(v)$ first. The actual algorithm uses a compressed form in order to keep the number of states polynomial.

For each child port $u'$ of an HC-variable in $t$, let $tr(u')$ and $ht(u')$ denote the trunk-length lower bound and the height upper bound of the corresponding HC-variable, respectively.

\begin{definition}[Compressed inheritance set]
Let $v \in V_T$. The compressed inheritance set at $v$, denoted by $\ins(v)$, consists of all triples $(u',i'',j^*)$ such that:
\begin{enumerate}
\item $u'$ is the child port of an HC-variable of $t$;
\item $0 \le i'' \le tr(u')-1$;
\item if $i'' < tr(u')-1$, then $(u',i'',j') \in \widehat{\ins}(v)$ for some $j'$, and $j^*$ is the minimum such value;
\item if $i'' = tr(u')-1$, then $(u',i',j') \in \widehat{\ins}(v)$ for some $i' \ge tr(u')-1$, and $j^*$ is the minimum such value;
\item $j^* \le ht(u')-1$.
\end{enumerate}
\end{definition}

Thus, for each child port $u'$, $\ins(v)$ merges all distances at least $tr(u')-1$ into the single value $tr(u')-1$, and for each retained distance stores only the minimum inherited height. Hence $\ins(v)$ removes redundant information from $\widehat{\ins}(v)$ while preserving the information needed for matching.

\begin{lemma}
Let $v$ be a vertex of $T$. For the purpose of deciding membership under generalized bindings, the full inheritance information $\widehat{\ins}(v)$ can be replaced by the compressed inheritance set $\ins(v)$ without loss of necessary matching information.
\end{lemma}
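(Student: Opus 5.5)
We will make the informal statement precise in the only form the algorithm uses it. The inheritance information at $v$ is consumed only when some HC-variable $[w,u']$ of $t$ is bound so that its parent port $w$ lands on an ancestor $a$ of $v$ (or on $v$ itself) and $u'$ lands on a descendant $d$ of $a$. Under a generalized binding this is feasible exactly when $dist_T(a,d)\ge tr(u')$ and the height of the part of $T[a]$ covered by the binding tree, namely $T[a]-T[d]$ restricted to the branch taken, is at most $ht(u')$. So we will prove that for every such $u'$ and every ancestor configuration, the predicate ``some triple of $\widehat{\ins}(v)$ for $u'$ extends to a valid binding at $a$'' depends only on $\ins(v)$. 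By induction on the bottom-up order, $\ins$ can then be computed from the children's $\ins$ and $\vs$ sets, and every decision the algorithm makes about $\vs$ at ancestors is unchanged.

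First we describe how a triple propagates one level up. If $(u',i',j')\in\widehat{\ins}(v)$ via $d$, and $p$ is the parent of $v$, then $d$ yields the triple $(u',i'+1,\max(j'+1,h))$ at $p$. Here $h$ is the contribution of those sibling subtrees of $v$ that the binding tree absorbs. Which siblings are absorbed is decided by the local bipartite matching, and the other children of $p$ must be matched by other pattern children. In either case the map $(i',j')\mapsto(i'+1,\max(j'+1,h))$ is monotone in both coordinates for fixed sibling choices. The same holds when $u'$ is tested at $a$: the test is $i'+k\ge tr(u')$ together with a height bound that is a monotone nondecreasing function of $j'$, where $k=dist_T(a,v)\ge 1$.

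We then prove two dominance facts. (a) Distance merging: for $i'\ge tr(u')-1$, after at least one further upward step the distance is at least $tr(u')$, so the trunk condition holds for all such $i'$. These triples therefore differ only in $j'$ and can be collapsed to $i''=tr(u')-1$. The case $k=0$ does not arise, because the parent port is a proper ancestor of the child port. (b) Height minimization: for a fixed (merged) distance, a triple with smaller $j'$ satisfies every height test that a larger one satisfies, by monotonicity. Keeping only the minimum $j^*$ therefore loses nothing. Discarding triples with $j^*>ht(u')-1$ is safe because the parent port lies strictly above $v$, so the covered height at $a$ is at least $j^*+1>ht(u')$, and heights never decrease under propagation. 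The witness $d$ kept for the minimum also has $u'\in\vs(d)$, so the identification of $u'$ with $d$ is sound.

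The main obstacle is step (b) inside the local matching. The value $h$ depends on which siblings the binding tree absorbs, and that choice interacts with how the remaining children of $p$ are matched to the other children in the matching rule of $u$. We must show that the minimum-height representative remains optimal jointly with the rest of the matching. We would argue this by an exchange argument: take any valid global matching that uses a non-minimal triple, replace its witness $d$ by the minimal witness $d^*$ in the same child subtree $T[v]$ with the same sibling assignment, and check that all constraints still hold. The only constraint involving $d$ is the height bound, which can only improve.
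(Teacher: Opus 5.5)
Your core reasoning matches the paper's rationale: collapse all distances $\ge tr(u')-1$ because at least one further upward step always follows, keep only the minimum height because every later propagation or test is monotone in $j'$, and drop triples with $j^*>ht(u')-1$. The paper gives no proof beyond the remark preceding the lemma. Your exchange argument is the right way to make this rigorous. Once the parent port of $[w,u']$ sits at a proper ancestor of $v$, every vertex of $T[v]$ outside the image of $t[u']$ lies in the binding tree, so replacing the witness inside $T[v]$ affects nothing outside $T[v]$.

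The gap is that you model the binding with the leaf-restricted picture, which is exactly what this paper abandons. You take the covered part to be ``$T[a]-T[d]$ restricted to the branch taken'' and assert that the minimal witness has $u'\in\vs(d)$. Under a generalized binding, $q$ may be internal, so the image $d$ of $u'$ can have children that no child of $u'$ matches but that are absorbed into the binding tree. Then $t[u']$ does not match $T[d]$, so in general $u'\notin\vs(d)$. The binding-tree height also includes $dist_T(a,d)+1+height(T[s])$ for every absorbed child $s$ of $d$. These are precisely the triples $(u',0,\beta)$ with $\beta>0$ produced by \textsc{LB-Update}, and your argument never accounts for them. (The paper's literal definition of $\widehat{\ins}(v)$ has the same defect: $d=v$ forces $j'=height_{T[v]}(v)=0$.)

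The fix is to let a witness be $d$ together with a set $S$ of absorbed children of $d$ admitted by some local matching at $d$. Its height coordinate is then $\max(height_{T[v]}(d),\, dist_T(v,d)+1+\max_{s\in S}height(T[s]))$, with the second term absent when $S=\emptyset$. This extra term depends only on the witness and only on $T[d]$, so your monotonicity and exchange steps go through unchanged.

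Two smaller points. First, your reason for $k\ge 1$ is not valid. ``The parent port is a proper ancestor of the child port'' does not exclude $a=v$, which your first paragraph explicitly allows, and for $a=v$ merging distance $tr(u')-1$ with larger distances would be lossy. What you need is that $\ins(v)$ is read only while processing a proper ancestor of $v$; a binding rooted at $v$ is detected through the $\ins$ sets of $v$'s children. Second, when $v$'s parent $p$ is an interior trunk vertex, no sibling choice occurs: all siblings of $v$ are absorbed, so $h$ is fixed by $T[p]$. The bipartite matching, and hence your ``main obstacle'', arises only at the parent-port vertex, where the triple is tested rather than propagated.
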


\begin{definition}[Local weighted bipartite graph]\label{def:local-weighted-bipartite-graph}
Let $u'$ be an internal vertex of $t$ with matching rule
$u' \Leftarrow \{J(c'_1),\ldots,J(c'_m)\}$,
and let $v$ be an internal vertex of $T$ with children $d_1,\ldots,d_n$. If $n<m$, then the rule is not applicable at $v$. Assume $n \ge m$, and define
$G_W(u',v)=(X \cup Y,E,w)$
as follows. The left part $X$ consists of $d_1,\ldots,d_n$, and the right part $Y$ consists of $J(c'_1),\ldots,J(c'_m)$ together with $n-m$ dummy vertices.

For $d_a \in X$ and a non-dummy pattern-side item $J(c'_\ell) \in Y$, add an edge as follows:
\begin{enumerate}
\item if $J(c'_\ell)=c'_\ell$, then $(d_a,J(c'_\ell)) \in E$ iff $c'_\ell \in \vs(d_a)$;
\item if $J(c'_\ell)=(c'_\ell,i,j)$, then $(d_a,J(c'_\ell)) \in E$ iff $(c'_\ell,i-1,j') \in \ins(d_a)$ for some $j' \le j-1$.
\end{enumerate}
Each such edge has weight $0$.

For $d_a \in X$ and a dummy vertex $\delta \in Y$, always add $(d_a,\delta)$ to $E$. If there is no $\ell$ such that $J(c'_\ell)=(c'_\ell,i,j)$, then define $w(d_a,\delta)=0$. Otherwise, let
$j_{\max}=\max\{\,j \mid J(c'_\ell)=(c'_\ell,i,j)\text{ for some }\ell\,\}$,
and define
\[
w(d_a,\delta)=
\begin{cases}
0 & \text{if } height(T[d_a])<j_{\max},\\
height(T[d_a]) & \text{otherwise}.
\end{cases}
\]
\end{definition}

\begin{definition}[Bottleneck value of a local matching instance]
Let $G_W(u',v)=(X \cup Y,E,w)$ be the local weighted bipartite graph for $u'$ and $v$. We say that the matching rule of $u'$ is \emph{applicable} at $v$ if $G_W(u',v)$ has a perfect matching. If so, define
$b(u',v)=\min_M \max_{e \in M} w(e)$,
where $M$ ranges over all perfect matchings of $G_W(u',v)$. A perfect matching attaining this value is called a bottleneck matching.
\end{definition}

If the matching rule of $u'$ is applicable at $v$, then $b(u',v)$ is the minimum height that must be inherited upward. Moreover, $u' \in \vs(v)$ holds if and only if $b(u',v)=0$. If $u'$ is the child port of an HC-variable, then $(u',0,0)$ is inserted into $\ins(v)$ when $b(u',v)=0$, and $(u',0,b(u',v))$ is inserted into $\ins(v)$ when $b(u',v)>0$.

The propagation step from the children of a tree vertex to the vertex itself is unchanged from the previous leaf-restricted model. Thus, for each vertex $v$ of $T$, the algorithm first propagates inherited information from the children of $v$, then computes additional triples of the form $(u',0,j')$ generated by applicable matching rules via bottleneck matchings, and finally compresses the result into $\ins(v)$.

For a leaf $v$ of $T$, we initialize $\vs(v)$ to the set of leaves of $t$ that match $v$, and $\ins(v)$ to the set of all triples $(u',0,0)$ such that $u' \in \vs(v)$ and $u'$ is the child port of an HC-variable. In Algorithm~\ref{alg:matching}, the procedures \textsc{Propagate}, \textsc{LB-Update}, and \textsc{Compress} use the already computed sets $\vs(d)$ and $\ins(d)$ for all children $d$ of the current vertex $v$.

\begin{algorithm}[t]
\caption{\textsc{Match-Generalized-Bindings}}
\label{alg:matching}
%\footnotesize
\begin{algorithmic}[1]
\REQUIRE $t \in \luttp_{\Sigma,\Lambda,X^\h}$ and $T \in \ut_{\Sigma,\Lambda}$
\ENSURE ``yes'' iff $T \in L(t)$
\STATE Let $r_t$ and $r_T$ be the roots of $t$ and $T$
\STATE Compute $Rule(t)$
\STATE $U \gets V_T$

\FOR{each vertex $v \in V_T$}
    \IF{$v$ is a leaf of $T$}
        \STATE Initialize $\vs(v)$ and $\ins(v)$
        \STATE $U \gets U \setminus \{v\}$
    \ELSE
        \STATE $\vs(v) \gets \emptyset$; $\ins(v) \gets \emptyset$
    \ENDIF
\ENDFOR

\WHILE{$U \neq \emptyset$}
    \FOR{each vertex $v \in U$ whose children are all in $V_T \setminus U$}
        \STATE $I_{\mathrm{prop}}(v) \gets \textsc{Propagate}(v)$
        \STATE $(V_{\mathrm{new}}(v), I_{\mathrm{new}}(v)) \gets \textsc{LB-Update}(v, Rule(t))$
        \STATE $\vs(v) \gets V_{\mathrm{new}}(v)$
        \STATE $\ins(v) \gets \textsc{Compress}\!\left(I_{\mathrm{prop}}(v) \cup I_{\mathrm{new}}(v)\right)$
        \STATE $U \gets U \setminus \{v\}$
    \ENDFOR
\ENDWHILE

\IF{$r_t \in \vs(r_T)$}
    \RETURN \textbf{yes}
\ELSE
    \RETURN \textbf{no}
\ENDIF
\end{algorithmic}
\end{algorithm}

In Algorithm~\ref{alg:matching}, \textsc{Propagate} performs the same inheritance propagation as in the previous leaf-restricted model, \textsc{LB-Update} constructs weighted bipartite graphs for locally relevant matching rules, computes their bottleneck values, inserts the resulting exact-match information into $\vs(v)$, and generates new triples of the form $(u',0,j')$ for $\ins(v)$, and \textsc{Compress} replaces the resulting inheritance information by its compressed form.

Let $G_W$ be a weighted bipartite graph. We define $BottleneckValue(G_W)$ as follows. If $G_W$ has no perfect matching, then $BottleneckValue(G_W)=\infty$. Otherwise,
\[
BottleneckValue(G_W)=\min_M \max_{e \in M} w(e),
\]
where $M$ ranges over all perfect matchings of $G_W$.
In the theoretical analysis, this value is computed by the algorithm of Gabow and Tarjan~\cite{gabow-tarjan1988}, whereas in the computational experiments it is computed by a straightforward count-up method.

\begin{algorithm}[t]
\caption{\textsc{LB-Update}}
\label{alg:lb-update}
%\footnotesize
\begin{algorithmic}[1]
\REQUIRE A vertex $v$ of $T$, the sets $\vs(d_1)$, $\ins(d_1)$, $\ldots,$ $\vs(d_n)$, $\ins(d_n)$ for all children $d_1,\ldots,d_n$ of $v$, and the rule set $Rule(t)$
\ENSURE The sets $\vs_{\mathrm{new}}(v)$ and $\ins_{\mathrm{new}}(v)$
\STATE $\vs_{\mathrm{new}}(v) \leftarrow \emptyset$
\STATE $\ins_{\mathrm{new}}(v) \leftarrow \emptyset$
\STATE Let $d_1,\ldots,d_n$ be all children of $v$
\FOR{each matching rule $\rho$ in $Rule(t)$}
    \STATE Let $\rho$ be the rule $u' \Leftarrow \{J(c'_1),\ldots,J(c'_m)\}$
    \IF{$n < m$}
        \STATE \textbf{continue}
    \ENDIF
    \STATE Construct the local weighted bipartite graph $G_W(u',v)$ according to Definition~\ref{def:local-weighted-bipartite-graph}
    \STATE $\beta \leftarrow BottleneckValue(G_W(u',v))$
    \IF{$\beta < \infty$}
        \IF{$\beta = 0$}
            \STATE $\vs_{\mathrm{new}}(v) \leftarrow \vs_{\mathrm{new}}(v) \cup \{u'\}$
        \ENDIF
        \IF{$u'$ is the child port of an HC-variable of $t$}
            \STATE $\ins_{\mathrm{new}}(v) \leftarrow \ins_{\mathrm{new}}(v) \cup \{(u',0,\beta)\}$
        \ENDIF
    \ENDIF
\ENDFOR
\RETURN $(\vs_{\mathrm{new}}(v), \ins_{\mathrm{new}}(v))$
\end{algorithmic}
\end{algorithm}

Algorithm~\ref{alg:lb-update} examines each matching rule at the current tree vertex $v$. If the corresponding local weighted bipartite graph has a perfect matching, let $\beta=b(u',v)$. Then $u'$ is inserted into $\vs_{\mathrm{new}}(v)$ if and only if $\beta=0$. If $u'$ is the child port of an HC-variable, then $(u',0,\beta)$ is inserted into $\ins_{\mathrm{new}}(v)$. Thus, \textsc{LB-Update} computes locally generated exact-match information and inheritance triples of the form $(u',0,j')$.

\begin{lemma}
For every internal vertex $v$ of $T$, Algorithm~\ref{alg:lb-update} correctly computes all newly generated exact-match information at $v$ and all newly generated inheritance triples of the form $(u',0,j')$ arising from locally applicable matching rules at $v$.
\end{lemma}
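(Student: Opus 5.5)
We argue by induction on the height of $v$ in $T$, assuming that $\vs(d_a)$ and $\ins(d_a)$ are already correct for every child $d_a$ of $v$. By the preceding lemma, $\ins(d_a)$ may be used in place of $\widehat{\ins}(d_a)$. The statement then has two parts, both concerning a fixed rule $u' \Leftarrow \{J(c'_1),\ldots,J(c'_m)\}$. The first is $u' \in \vs(v)$ iff $b(u',v)=0$. The second is that, when $u'$ is a child port, the minimum height $j'$ with $(u',0,j') \in \widehat{\ins}(v)$ equals $b(u',v)$. A triple $(u',0,j')$ means that $u'$ matches at $d=v$ itself. Here we read ``$u'$ matches at $v$'' for a child port in the relaxed sense: the children of $u'$ are matched injectively to some children of $v$, and the unmatched children of $v$ are absorbed into the binding tree of the HC-variable above $u'$. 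Hence $height_{T[v]}(v)$ contributes nothing beyond the heights of the absorbed subtrees, and $j'$ is governed by the largest height of an absorbed child subtree.

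\emph{Soundness.} Take a perfect matching $M$ of $G_W(u',v)$ with bottleneck value $\beta$. Each real item $c'_\ell$ matched to $d_a$ gives $c'_\ell \in \vs(d_a)$, so $t[c'_\ell]$ matches $T[d_a]$. Each HC item $(c'_\ell,i,j)$ matched to $d_a$ gives a triple $(c'_\ell,i-1,j') \in \ins(d_a)$ with $j' \le j-1$. This yields a binding tree rooted at $v$: the edge $v$--$d_a$ followed by a path of length at least $i-1$ down to a vertex where $c'_\ell$ matches. Its trunk is at least $i$ and its height is at most $j$, so it is admissible under the generalized model, whose child port need not be a leaf. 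Children matched to dummies must also be absorbed, into the binding of some HC-variable at $u'$ (or into the variable above $u'$). When $u'$ has an HC child, the rule $w(d_a,\delta)=0$ when $height(T[d_a])<j_{\max}$ is justified by attaching $T[d_a]$ to the root of the HC binding tree with maximal $j$: the height of that binding tree becomes $\max(\text{old}, height(T[d_a])+1)\le j_{\max}$. Children with weight $height(T[d_a])>0$ cannot be absorbed locally and must be pushed upward, so the inherited height is $\beta$. If $\beta=0$, every child is placed and $u'\in\vs(v)$. I need to recheck the boundary case $height(T[d_a])+1 = j_{\max}$ against the strict inequality in the definition.

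\emph{Completeness.} Given any match of $t[u']$ at $v$, or any witness of $(u',0,j')$, read off the injective assignment of the children $c'_\ell$ to children $d_a$. Every item $J(c'_\ell)$ maps to a distinct child, and for HC items the compressed inheritance set contains the required triple with $i-1$: the trunk inside $T[d_a]$ has length at least $i-1$ with a minimum height of at most $j-1$, and the compression merges all distances $\ge tr-1$, which is exactly why $i-1$ is used. Complete this assignment with dummies to obtain a perfect matching whose maximum weight is at most the inherited height $j'$. Taking the minimum over all witnesses gives $b(u',v)\le j'$.

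\emph{Main obstacle.} The hardest step is the dummy-weight argument: showing that absorbing leftover subtrees into a single HC binding with the largest $j$ is without loss of generality. The key point is that distributing the leftover subtrees among several HC variables can never do better than putting all of them under the one with maximal height bound $j_{\max}$, because attaching at the root of that binding tree leaves its trunk unchanged. I would prove this exchange claim first and then combine it with the two directions above.
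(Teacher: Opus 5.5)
\textbf{There is a genuine gap in your soundness direction, and it cannot be closed.} With the dummy weights exactly as given in Definition~\ref{def:local-weighted-bipartite-graph}, the lemma is false. The paper itself gives no proof of it. It only asserts that $b(u',v)$ is the minimum inherited height and that $u'\in\vs(v)$ iff $b(u',v)=0$, so it offers nothing to fall back on.

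\textbf{First failure: no HC child.} You justify zero-weight dummy edges only when $u'$ has an HC child. The definition, however, also sets $w(d_a,\delta)=0$ when $u'$ has no HC child. In that case your step ``if $\beta=0$, every child is placed and $u'\in\vs(v)$'' fails: a child matched to a dummy has nowhere to go inside $t[u']\theta$. If $u'$ is a child port, pushing that child upward costs positive height, so inserting $(u',0,0)$ is also wrong. Concretely, let $t$ be a single edge $(r,a)$ and let $T$ be a root with two leaf children $d_1,d_2$. Then $\vs(d_1)=\vs(d_2)=\{a\}$, and $G_W(r,r_T)$ has a perfect matching using only weight-$0$ edges. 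So LB-Update puts $r$ into $\vs(r_T)$, although $L(t)$ contains only the two-vertex tree.

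\textbf{Second failure: the height of upward-pushed children is off by one.} A child $d_a$ absorbed into the binding of the variable above $u'$ hangs below $v$. The part of that binding rooted at $v$ therefore has height $height(T[d_a])+1$, the same $+1$ you correctly use for local absorption. Propagate and the edge test $j'\le j-1$ in $G_W$ read the third coordinate as the height of the binding part rooted at the current vertex; this is what $height_{T[v]}(d)$ measures. Consequently three of your claims are each wrong by one: ``$j'$ is governed by the largest height of an absorbed child subtree'', ``the inherited height is $\beta$'', and, downstream, ``its height is at most $j$''.

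For an example, let $t$ consist of a $(1,2)$-HC-variable $[r,u']$ and a $(1,1)$-HC-variable $[u',w]$. Let $T$ be $r_T$ with a single child $d$, where $d$ has a leaf child $a$ and two children $b,e$, each with one leaf child. At $d$, $a$ takes the item $(w,1,1)$ and $b,e$ go to dummies of weight $1$, so $\beta=1$. The triple $(u',0,1)\in\ins(d)$ then passes the test $1\le 2-1$ at $r_T$, and the algorithm accepts. Yet $T\notin L(t)$. The two depth-$3$ leaves cannot lie in the $(1,2)$-binding, whose height is at most $2$. The $(1,1)$-binding is a star, whose non-root vertices are siblings, so it cannot contain both.

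\textbf{What your argument actually needs} is a corrected weight: $w(d_a,\delta)=0$ iff $u'$ has an HC child and $height(T[d_a])<j_{\max}$, and $w(d_a,\delta)=height(T[d_a])+1$ otherwise. With that definition, your soundness/completeness split and your exchange remark about absorbing leftovers under the $j_{\max}$ variable go through. Your boundary worry also disappears, since $height(T[d_a])<j_{\max}$ is exactly $height(T[d_a])+1\le j_{\max}$.
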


Since inheritance propagation depends only on the distances and heights recorded in $\ins(d_a)$, and not on whether child ports correspond to leaves or internal vertices, \textsc{I\_Set\_Making} of~\cite{matsushita-eskm2025,shoudai-ieice2017} applies without modification under generalized bindings.
Algorithm~\ref{alg:propagate} performs the same inheritance propagation as in the previous model, carrying upward all triples originating below the current vertex while updating their distance and height values.
Algorithm~\ref{alg:compress} removes redundant inheritance information. For each child port $u'$, it keeps only the minimum inherited height for each retained distance, merges all distances at least $tr(u')-1$ into the single value $tr(u')-1$, and discards triples whose inherited height is at least $ht(u')-1$.

\begin{algorithm}[t]
\caption{\textsc{Propagate}}
\label{alg:propagate}
\begin{algorithmic}[1]
\REQUIRE A vertex $v$ of $T$, and the sets $\vs(d_1)$, $\ins(d_1)$,
  $\ldots,$ $\vs(d_n)$, $\ins(d_n)$ for all children $d_1,\ldots,d_n$ of $v$
\ENSURE The propagated inheritance information $\ins_{\mathrm{prop}}(v)$
\STATE $CS(d_a) \leftarrow \vs(d_a) \cup \ins(d_a)$ for each child $d_a$
\STATE $\mathit{CSs} \leftarrow (CS(d_1),\ldots,CS(d_n))$
\STATE $\ins_{\mathrm{prop}}(v) \leftarrow \textsc{I\_Set\_Making}(v,\,\mathit{CSs},\,H_t)$~\cite{matsushita-eskm2025,shoudai-ieice2017}
\RETURN $\ins_{\mathrm{prop}}(v)$
\end{algorithmic}
\end{algorithm}

\begin{algorithm}[t]
\caption{\textsc{Compress}}
\label{alg:compress}
%\footnotesize
\begin{algorithmic}[1]
\REQUIRE A set $S$ of inheritance triples
\ENSURE The compressed inheritance set $\textsc{Compress}(S)$

\STATE $C \leftarrow \emptyset$

\FOR{each child port $u'$ of an HC-variable of $t$}
    \FOR{$i'' \leftarrow 0$ to $tr(u')-2$}
        \STATE $M \leftarrow \{\, j' \mid (u',i'',j') \in S \,\}$
        \IF{$M \neq \emptyset$ and $\min M \le ht(u')-1$}
            \STATE $C \leftarrow C \cup \{(u',i'',\min M)\}$
        \ENDIF
    \ENDFOR

    \STATE $M \leftarrow \{\, j' \mid (u',i',j') \in S \text{ for some } i' \ge tr(u')-1 \,\}$
    \IF{$M \neq \emptyset$ and $\min M \le ht(u')-1$}
        \STATE $C \leftarrow C \cup \{(u',tr(u')-1,\min M)\}$
    \ENDIF
\ENDFOR

\RETURN $C$
\end{algorithmic}
\end{algorithm}

\begin{theorem}
Algorithm~\ref{alg:matching} correctly decides whether $T \in L(t)$ holds under generalized bindings.
\end{theorem}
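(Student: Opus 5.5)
We argue by induction on the height of vertices of $T$, processed in the bottom-up order of Algorithm~\ref{alg:matching}. The invariant is that after $v$ is processed, (a) the computed $\vs(v)$ equals $\{u' \in V_t \mid t[u'] \text{ matches } T[v]\}$, and (b) the computed $\ins(v)$ equals the compressed inheritance set obtained from $\widehat{\ins}(v)$ as in its definition. The theorem then follows from (a) at $v=r_T$. Indeed, $T \in L(t)$ iff $t[r_t]=t$ matches $T[r_T]=T$, and the algorithm answers ``yes'' iff $r_t \in \vs(r_T)$.

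\textbf{Base case.} For a leaf $v$, a subpattern $t[u']$ matches the single-vertex tree $T[v]$ only when $u'$ is a leaf of $t$. A nonleaf $u'$ would force at least one edge, and an HC-variable binding has at least two vertices. Hence $\vs(v)$ is the set of matching leaves of $t$. The only descendant of $v$ is $v$ itself, with distance $0$ and height $0$, so $\widehat{\ins}(v)=\{(u',0,0)\}$ over child ports $u' \in \vs(v)$. This set is already compressed, provided $ht(u') \ge 1$, which always holds.

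\textbf{Inductive step.} Let $v$ be an internal vertex with children $d_1,\dots,d_n$, and assume the invariant for each $d_a$. The argument splits into two parts.

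For $\vs(v)$, we characterize when $t[u']$ matches $T[v]$ with $u'$ internal. This holds iff there is an injection $\phi$ from the children $c'_\ell$ of $u'$ into $\{d_1,\dots,d_n\}$ satisfying the following conditions.
\begin{enumerate}
\item If $c'_\ell$ is joined by an edge, then $t[c'_\ell]$ matches $T[\phi(c'_\ell)]$.
\item If $c'_\ell$ is joined by an $(i,j)$-HC-variable, then the binding occupies $\phi(c'_\ell)$'s whole subtree except the part below the image $d$ of $c'_\ell$. This requires $dist_T(v,d)\ge i$ and $height_{T[v]}(d)\le j$, i.e.\ $(c'_\ell,i',j')\in\widehat{\ins}(\phi(c'_\ell))$ with $i'\ge i-1$ and $j'\le j-1$.
\item Every unmatched child $d_a$ is absorbed into the binding tree of some HC-variable at $u'$. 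This is possible iff such a variable exists with $height(T[d_a])+1\le j$, since the binding tree $g$ may include extra branches at its root.
\end{enumerate}

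The key observation, and the main obstacle, is the condition on unmatched children. Dummy vertices force $b(u',v)=0$ exactly when every absorbed subtree has height below $j_{\max}$. Attaching all extra subtrees to the variable with largest $j$ is optimal, since heights are checked independently per branch and the trunk constraint is unaffected. Next we invoke the compression lemma: the test $(c'_\ell,i-1,j')\in\ins(d_a)$ with $j'\le j-1$ is equivalent to the existence of a suitable triple in $\widehat{\ins}(d_a)$. All distances $\ge i-1=tr-1$ are merged with minimum height, and triples with $j^*>ht-1$ are useless. Together these give: $u'\in\vs(v)$ iff $b(u',v)=0$. This is the lemma on \textsc{LB-Update}.

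For $\ins(v)$, we split the triples of $\widehat{\ins}(v)$ by the descendant $d$. Triples with $d\ne v$ come from some $\widehat{\ins}(d_a)$ with distance increased by one. Their height in $T[v]$ is the maximum of the height inside $T[d_a]-T[d]$ plus one and $height(T[d_b])+1$ for $b\ne a$; this is exactly the update performed by \textsc{I\_Set\_Making}, whose correctness we take from \cite{matsushita-eskm2025,shoudai-ieice2017}. Triples with $d=v$ have distance $0$. Here the height $j'$ must record the parts of $T[v]$ not consumed by the pattern, since these must then be absorbed by an enclosing binding. The minimal such value is the bottleneck value $b(u',v)$: a perfect matching corresponds to a valid local assignment, and its maximum weight is the height of the leftover subtrees (the $0$ weight when they fit below $j_{\max}$). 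We must check that ``height exactly $height(T[d_a])$'' rather than ``$+1$'' matches the convention for $height_{T[v]}(d)$ used in later propagation. Finally, minimality is all that survives \textsc{Compress}, and \textsc{Compress} implements the compressed-set definition verbatim. This establishes (b) and completes the induction.
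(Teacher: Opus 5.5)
Your skeleton matches the paper's proof: bottom-up induction over $T$ with invariants for $\vs$ and $\ins$, read off at $r_T$. The paper simply asserts that \textsc{Propagate}, \textsc{LB-Update} and \textsc{Compress} are correct. Where you try to justify the \textsc{LB-Update} step (your ``key observation''), there is a genuine gap, and it cannot be closed for $G_W$ as defined.

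Your condition 3 is correct, but the dummy weights do not implement it. If no item $J(c'_\ell)$ has the form $(c'_\ell,i,j)$, every dummy edge has weight $0$. Then $b(u',v)=0$ and $u'$ enters $\vs(v)$, although by your own condition 3 no unmatched child is allowed. For $t$ a single edge and $T$ a root with two leaf children, the algorithm answers yes while $L(t)=\{t\}$.

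The ``$+1$'' check you postponed also fails. Heights count edges: leaves give $(u',0,0)$, and the parent-port test is $j'\le j-1$. So a leftover child $d_a$ at the image $v$ of a child port makes the enclosing binding's part inside $T[v]$ have height $height(T[d_a])+1$, whereas \textsc{LB-Update} records $height(T[d_a])$. Concretely, let $t$ be a $(1,2)$-HC-variable $[r,u']$, where $u'$ has a leaf edge-child and a $(1,1)$-HC-variable to a leaf. Let $T$ be a root with one child $v$, whose children are two leaves and a vertex $b$ with one leaf child. Then $b$ can only be leftover at $v$, forcing a binding of height $3>2$, so $T\notin L(t)$. Yet \textsc{LB-Update} records $(u',0,1)$ at $v$, it survives \textsc{Compress}, and the root rule accepts.

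Independently, your invariants use the paper's $\widehat{\ins}$. Its triples require $u'\in\vs(d)$, and for $d=v$ they have height $0$. Hence invariant (b) is contradicted by the triples $(u',0,\beta)$ with $\beta>0$ that \textsc{LB-Update} inserts. Your condition 2 also lets the child port land only on an exact match, which is the leaf-restricted model. Take $t$ with a $(1,2)$-HC-variable $[r,u']$ and one edge $(u',e')$, and $T$ a root whose single child $v$ has two leaf children. Here $T\in L(t)$ only because the binding continues below $q=v$, and your characterization rejects it.

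A correct argument needs two repairs. First, $\widehat{\ins}$ must be redefined to record partial matches of $t[u']$ at $d$, together with the height of the whole binding part: trunk, side branches, and the leftover children of $d$. Second, the dummy weights must be $height(T[d_a])+1$ for every child not absorbable by an HC-variable at $u'$, including when $u'$ has none. Only after these repairs can your induction, or the paper's, be completed.
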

\begin{proof}
The proof is by bottom-up induction on the vertices of $T$.
For a leaf $v$ of $T$, the initialization of $\vs(v)$ and $\ins(v)$ is immediate from the definitions. Assume inductively that, for every child $d$ of an internal vertex $v$, the sets $\vs(d)$ and $\ins(d)$ have already been computed correctly. Then Algorithm~\ref{alg:propagate} correctly propagates inheritance information from the children of $v$, Algorithm~\ref{alg:lb-update} correctly computes all locally generated exact-match and inheritance information at $v$, and Algorithm~\ref{alg:compress} replaces the resulting inheritance information by an equivalent compressed form. Hence both $\vs(v)$ and $\ins(v)$ are computed correctly.
If $r_t \in \vs(r_T)$, then $T \in L(t)$ by definition of $\vs$. Conversely, if $T \in L(t)$, then by induction on the structure of the corresponding substitution, one can verify that $r_t \in \vs(r_T)$.
\end{proof}

\begin{theorem}
Under the assumption that each local bottleneck matching instance is solved by the algorithm of Gabow and Tarjan~\cite{gabow-tarjan1988}, the membership problem under generalized bindings is solvable in polynomial time.
\end{theorem}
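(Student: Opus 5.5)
Correctness is already given by the preceding theorem, so what remains is to bound the running time of Algorithm~\ref{alg:matching} by a polynomial in $N=|V_T|$ and $M=|V_t|$. I would carry this out by bounding the cost of the work done at a single tree vertex $v$ and then summing over all $v\in V_T$.

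\textbf{Step 1: size of the state sets.} Clearly $|\vs(v)|\le M$. For $\ins(v)$, the compressed inheritance set keeps, for each child port $u'$ and each $i''\in\{0,\ldots,tr(u')-1\}$, at most one triple, namely the one with minimum $j^*$. If $tr(u')$ exceeded $N$, no trunk of that length could exist, so the variable could never be bound. I would therefore first cap each $tr(u')$ and $ht(u')$ at $N$; this changes nothing, since every distance and height in $T$ is below $N$. With this cap, $|\ins(v)|\le MN$, and each stored value $j^*$ is at most $N$.

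\textbf{Step 2: cost per vertex.} Let $n_v$ be the number of children of $v$.

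\textsc{Propagate} runs \textsc{I\_Set\_Making} from~\cite{matsushita-eskm2025,shoudai-ieice2017} on the sets $CS(d_a)$, which have total size $O(n_v MN)$. The cited analysis gives polynomial time in this input size, and I would quote it rather than re-derive it.

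\textsc{Compress} scans its input once per pair $(u',i'')$. Its input has size $O(n_vMN+M)$, so the cost is $O(MN\cdot(n_vMN+M))$, which is polynomial.

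\textsc{LB-Update} processes each of the at most $M$ rules in $Rule(t)$ in turn. For one rule, $G_W(u',v)$ has $2n_v$ vertices and at most $n_v^2$ edges. Each edge test is a lookup in $\vs(d_a)$ or $\ins(d_a)$, costing $O(MN)$. The dummy weights need $height(T[d_a])$, which I would precompute for all vertices in $O(N)$ time. Solving the bottleneck matching instance with Gabow--Tarjan~\cite{gabow-tarjan1988} then costs $O((n_v\log n_v)^{1/2}n_v^2)$.

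\textbf{Step 3: summation.} Since $\sum_v n_v=N-1$, summing the bottleneck-matching cost over all $v$ and all $M$ rules gives
\[
O\Bigl(M\sum_v n_v^{5/2}\log^{1/2}n_v\Bigr)\subseteq O\bigl(MN^{5/2}\log^{1/2}N\bigr).
\]
Graph construction adds $O(M\cdot N^2\cdot MN)$ overall. The \textsc{Propagate} and \textsc{Compress} costs sum similarly to polynomial bounds. The outer \textbf{while} loop of Algorithm~\ref{alg:matching} processes every vertex exactly once. It can be implemented in $O(N)$ total overhead by using a postorder traversal instead of rescanning $U$; even the naive rescanning costs only $O(N^2)$.

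\textbf{Main obstacle.} The delicate point is Step~1, namely ensuring that $\ins(v)$ stays polynomially bounded. Unlike the leaf-restricted model, triples $(u',0,j')$ with $j'>0$ appear here, so I must check two things:
\begin{enumerate}
\item the heights $j'$ produced by bottleneck values are bounded by $N$, which holds because they are heights of subtrees of $T$;
\item \textsc{Compress} keeps only one height per $(u',i'')$.
\end{enumerate}
Together these give the $MN$ bound regardless of how many raw triples arise. Once that bound is in place, everything else is routine accounting.
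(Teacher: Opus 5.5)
Your proposal is correct and follows the same route as the paper: the only new cost is one bottleneck matching per (tree vertex, rule) pair, on a graph with $O(\deg_T(v))$ vertices and $O(\deg_T(v)^2)$ edges, solved by Gabow--Tarjan. Summing with $\sum_v \deg_T(v)=N-1$ gives $O(|V_t|\,|V_T|^{5/2}\sqrt{\log |V_T|})$, and propagation and compression are deferred to the earlier analysis. Your extra bookkeeping spells out what the paper compresses into the remark that $\ins(v)$ has polynomial size in $|t|$: you cap $tr(u')$ and $ht(u')$ at $N$ so that $|\ins(v)|\le MN$ and the loops of \textsc{Compress} stay polynomial regardless of the HC parameter magnitudes, and you explicitly charge graph construction.
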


\begin{proof}
The analysis follows the framework of the previous leaf-restricted algorithm~\cite{matsushita-eskm2025}. The propagation and compression steps remain polynomial, since for each vertex $v$ of $T$, the compressed inheritance set $\ins(v)$ has polynomial size in $|t|$.

The only new cost arises in Algorithm~\ref{alg:lb-update}. For each internal vertex $v$ of $T$ and each matching rule of $t$, we solve one local bottleneck matching instance on $G_W(u',v)$. Let $d=\deg_T(v)$. Then $G_W(u',v)$ has $O(d)$ vertices on each side and $O(d^2)$ edges, so by Gabow and Tarjan~\cite{gabow-tarjan1988}, each instance is solved in
$O\!\left((d\log d)^{1/2}\cdot d^2\right) = O\!\left(d^{5/2}\sqrt{\log d}\right)$.
Since $|Rule(t)|=O(|V_t|)$, the total cost over all vertices of $T$ is
\[
O\!\left(|V_t| \sum_{v\in V_T}\deg_T(v)^{5/2}\sqrt{\log \deg_T(v)}\right).
\]

Let $N=|V_T|$ and $\Delta=\max_{v\in V_T}\deg_T(v)$. Since $\deg_T(v)\le \Delta$ and $\sum_{v\in V_T}\deg_T(v)=N-1$, this is bounded by
$O\!\left(|V_t|\,N\,\Delta^{3/2}\sqrt{\log \Delta}\right)$,
and hence also by
$O\!\left(|V_t|\,|V_T|^{5/2}\sqrt{\log |V_T|}\right)$.
Therefore, the membership problem under generalized bindings is solvable in polynomial time.
\end{proof}

%In the next section, we report computational experiments for the proposed algorithm and evaluate its running time.

%%%%%%%%%%%%%%%%%%%%%%%%%%%%%%%%%%%%%%%%%%%%%%%%%%%%%%%%%%%%%%%%%%%%%%
%% 3. Computational Experiments
%%%%%%%%%%%%%%%%%%%%%%%%%%%%%%%%%%%%%%%%%%%%%%%%%%%%%%%%%%%%%%%%%%%%%%
\section{Computational Experiments}\label{sec:experiments}

\begin{figure*}[t]
\centering
\includegraphics[width=.72\textwidth]{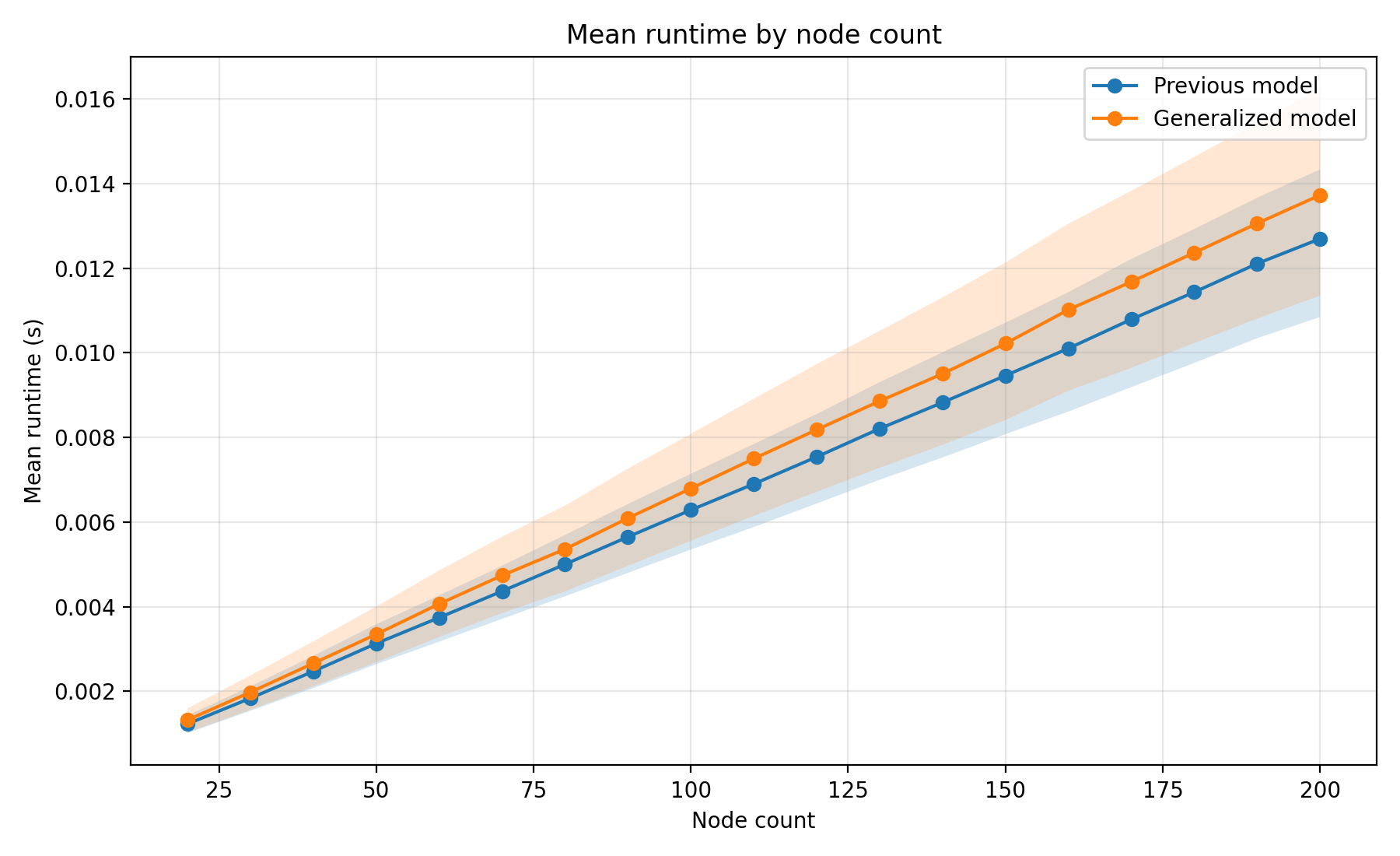}
\caption{Mean running time of the previous model and the generalized model for each node count. Shaded regions indicate the 25th--75th percentile ranges of the running times over the randomly generated instances.}
\label{fig:runtime-mean}
\end{figure*}

\subsection{Implementation Details}

Although the algorithm of Gabow and Tarjan~\cite{gabow-tarjan1988} yields a better theoretical bound for the bottleneck matching subproblem, its implementation is nontrivial in practice. Therefore, in our computational experiments, we adopt a simple count-up method for solving each local bottleneck matching instance. 
More precisely, if no item of the form $J(c'_\ell)=(c'_\ell,i,j)$ occurs, we test only $\beta=0$; otherwise, we enumerate candidate bottleneck values $\beta=0,1,\ldots,J_{\max}-1$ in increasing order, where $J_{\max}$ is the maximum height bound appearing in an HC-variable child over all matching rules of $t$. For each $\beta$, we test whether a perfect matching exists in the subgraph of $G_W(u',v)$ induced by edges of weight at most $\beta$, using the Hopcroft--Karp algorithm~\cite{hopcroft-karp1973}. The smallest feasible value of $\beta$ is then taken as the bottleneck value.

Thus, at most $\max\{1,J_{\max}\}$ perfect-matching tests are performed per instance, each requiring $O(d^{5/2})$ time for a bipartite graph of degree $d$. Hence the implemented algorithm runs in $O\!\left(|V_t|\,N\,\Delta^{3/2}\max\{1,J_{\max}\}\right)$ time, where $N = |V_T|$ and $\Delta = \max_{v\in V_T}\deg_T(v)$. 
When the height bounds of HC-variables are large, this simple implementation requires more perfect-matching tests (proportional to $J_{\max}$), which may increase the overall running time in practice.
Thus, the experimental results show that the proposed framework remains practically workable even when the bottleneck-matching step is implemented in this straightforward manner.

\subsection{Experimental Setup}

To evaluate the practical running time of the proposed algorithm, we conducted computational experiments on randomly generated instances and compared the generalized binding model with the previous leaf-restricted model. For each node count from 20 to 200, we generated 950,000 instances in total and measured the running time of both algorithms under the same implementation environment. For each node count, both target trees and patterns were generated at random, and HC-variables together with their parameters $(i,j)$ were also assigned randomly under the size constraints of the generated patterns. The same family of random instances was used for comparing the previous leaf-restricted model and the generalized model. All experiments were conducted on a machine running Ubuntu 24.04 and equipped with a 13th Gen Intel(R) Core(TM) i9-13900K CPU and 62 GB of RAM. The algorithm was implemented in Python. In the experiments reported below, we focus on the dependence of the running time on the input size.

\subsection{Running-Time Results}

Fig.~\ref{fig:runtime-mean} shows the mean running time of the previous model and the generalized model with respect to the number of nodes. In both models, the running time increases steadily as the input size grows. The generalized model is consistently slower than the previous one, but the difference remains moderate over the entire tested range.
More specifically, the ratio of the mean running time of the generalized model to that of the previous model is approximately 1.08 on average, and stays within a narrow range from 1.0694 to 1.0905 over all tested node counts. The overall mean of the per-node ratios is 1.0803. Thus, although the generalized model requires additional bottleneck-matching computations, the increase in running time is stable and limited.
These results indicate that the proposed generalization preserves practical efficiency while allowing more flexible bindings than the previous leaf-restricted model.
In particular, for the tested range up to 200 nodes, the observed running time remained on the order of $10^{-3}$ to $10^{-2}$ seconds per instance.
The nearly constant runtime ratio suggests that the additional cost introduced by generalized bindings scales proportionally to the baseline matching cost over the tested range.

%%%%%%%%%%%%%%%%%%%%%%%%%%%%%%%%%%%%%%%%%%%%%%%%%%%%%%%%%%%%%%%%%%%%%%
%% 4. Concluding Remarks
%%%%%%%%%%%%%%%%%%%%%%%%%%%%%%%%%%%%%%%%%%%%%%%%%%%%%%%%%%%%%%%%%%%%%%
\section{Concluding Remarks}\label{sec:concl}

In this paper, we presented a polynomial-time algorithm for deciding the membership problem under generalized bindings and reported computational results supporting its practical feasibility. As future work, we plan to extend the framework to richer structural restrictions such as diameter constraints and to study its learnability under computational learning models. This is a natural direction, because generalized bindings increase the expressive power of the pattern language while preserving polynomial-time membership.

\end{document}